\newtheorem{lemma}{Lemma}
\renewcommand{\epsilon}{\varepsilon}
\renewcommand{\O}[1]{O\left(#1\right)}
\newcommand{\BigTheta}[1]{\Theta\left(#1\right)}
\newcommand{\norm}[1]{\left\|#1\right\|}
\newcommand{\comm}[1]{\left[#1\right]}
\newcounter{para}
\newcommand*\bigcdot{\mathpalette\bigcdot@{.5}}
\newcommand*\bigcdot@[2]{\mathbin{\vcenter{\hbox{\scalebox{#2}{$\m@th#1\bullet$}}}}}
\newcolumntype{L}{>{$}l<{$}} 
\newcolumntype{C}{>{$}c<{$}} 
\newcolumntype{R}{>{$}r<{$}} 
\DeclarePairedDelimiter\ceil{\lceil}{\rceil}
\newcommand{\Norm}[1]{\left\lVert#1\right\rVert}
\newcommand*{\addFileDependency}[1]{
  \typeout{(#1)}
  \@addtofilelist{#1}
  \IfFileExists{#1}{}{\typeout{No file #1.}}
}
\renewcommand{\o}{1}
\newcommand{\e}{2}
\newcommand{\J}{J}
\newcommand{\etal}{et al.}
\newcommand{\Section}[1]{\textit{#1.---}}
\begin{document}

\title{Destructive Error Interference in Product-Formula Lattice Simulation}

\author{Minh~C.~Tran}
\affiliation{Joint Center for Quantum Information and Computer Science, NIST/University of Maryland, College Park, Maryland 20742, USA}
\affiliation{Joint Quantum Institute, NIST/University of Maryland, College Park, Maryland 20742, USA}
\affiliation{Kavli Institute for Theoretical Physics, University of California, Santa Barbara, California 93106, USA}
\author{Su-Kuan Chu}
\affiliation{Joint Center for Quantum Information and Computer Science, NIST/University of Maryland, College Park, Maryland 20742, USA}
\affiliation{Joint Quantum Institute, NIST/University of Maryland, College Park, Maryland 20742, USA}
\author{Yuan~Su}
\affiliation{Joint Center for Quantum Information and Computer Science, NIST/University of Maryland, College Park, Maryland 20742, USA}
\affiliation{Department of Computer Science, University of Maryland, College Park, Maryland 20742, USA}
\affiliation{Institute for Advanced Computer Studies, University of Maryland, College Park, Maryland 20742, USA}
\author{Andrew~M.~Childs}
\affiliation{Joint Center for Quantum Information and Computer Science, NIST/University of Maryland, College Park, Maryland 20742, USA}
\affiliation{Department of Computer Science, University of Maryland, College Park, Maryland 20742, USA}
\affiliation{Institute for Advanced Computer Studies, University of Maryland, College Park, Maryland 20742, USA}
\author{Alexey~V.~Gorshkov}
\affiliation{Joint Center for Quantum Information and Computer Science, NIST/University of Maryland, College Park, Maryland 20742, USA}
\affiliation{Joint Quantum Institute, NIST/University of Maryland, College Park, Maryland 20742, USA}
\affiliation{Kavli Institute for Theoretical Physics, University of California, Santa Barbara, California 93106, USA}

\begin{abstract}
Quantum computers can efficiently simulate the dynamics of quantum systems.
In this paper, we study the cost of digitally simulating the dynamics of several physically relevant systems using the first-order product formula algorithm.
We show that the errors from different Trotterization steps in the algorithm can interfere destructively, yielding a much smaller error than previously estimated.
In particular, we prove that the total error in simulating a nearest-neighbor interacting system of $n$ sites for time $t$ using the first-order product formula with $r$ time slices is $\O{{nt}/{r}+{nt^3}/{r^2}}$ when $nt^2/r$ is less than a small constant.
Given an error tolerance $\epsilon$, the error bound yields an estimate of $\max\{O({n^2t}/{\epsilon}),O({n^2 t^{3/2}}/{\epsilon^{1/2}})\}$ for the total gate count of the simulation. 
The estimate is tighter than previous bounds and matches the empirical performance observed in Childs et al.~[PNAS 115, 9456-9461 (2018)].
We also provide numerical evidence for potential improvements and conjecture an even tighter estimate for the gate count. 
\end{abstract}
\maketitle
Simulating the dynamics of quantum systems is one of the primary applications of quantum computers. 
While analog quantum simulations rely on engineering physical systems to mimic other systems, 
digital quantum simulations use algorithms to decompose the evolution unitary into a sequence of elementary quantum gates.
The first quantum simulation algorithm proposed by Lloyd~\cite{Lloyd1996} uses  the Lie-Trotter product formula, also known as the first-order product formula~(PF1)~\cite{Suzuki1985,Huyghebaert1990}. 
Since then, more advanced quantum simulation algorithms have been developed, including algorithms based on the higher-order product formulae~\cite{ChildsY19,Childs2004, BerryACS07,ChildsMNRS2017}, linear combinations of unitaries~\cite{BerryCCKS2015,Low19}, quantum signal processing~\cite{LowC2017}, and Lieb-Robinson bounds~\cite{Haah,Tran2018}, which all asymptotically reduce the cost of digital quantum simulation in terms of the number of gates used in the limit of large time or large system size.

Despite these developments, PF1 remains one of the most popular algorithms for near-term implementations of digital quantum simulation due to its simplicity.
In practice, the small prefactor in the scaling of the gate count of PF1 compared to more advanced quantum simulation algorithms makes it attractive for simulations where the evolution time and the system size are not too large~\cite{ChildsMNRS2017}. 

Despite its simplicity and wide applicability, a tight error bound for PF1 in simulating many physically relevant systems remains elusive. 
Recent works~\cite{Suzuki1985,Huyghebaert1990,ChildsY19} estimated that $\O{n^2t^2}$ elementary gates suffice to simulate the dynamics of a nearest-neighbor interacting system consisting of $n$ sites for time~$t$ using PF1~\footnote{Refs.~\cite{Suzuki1985,Huyghebaert1990,ChildsY19} took into account the commutativity between some interaction terms in the Hamiltonian of a nearest-neighbor interacting system. 
Without this commutativity, the gate count would be $\O{n^3t^2}$~\cite{Childs2004,BerryACS07}}.
However, the numerical evidence in Ref.~\cite{ChildsMNRS2017} suggests that PF1 performs much better than this in practice. 
In particular, the gate count for simulating the dynamics of a nearest-neighbor Heisenberg spin chain of length $n$ for time $t=n$ scales only as $\O{n^{2.964}}$.  
In addition, Heyl~\etal~\cite{Heyl18} also found that the error of simulating the time evolution of a local observable using PF1 can be much smaller than theoretically estimated.

In this paper, we provide an approach to tighten the error bound of PF1 for simulating several physically relevant systems, including those with nearest-neighbor interactions.
The key finding of the paper is that the errors from different steps of the algorithm can combine destructively, resulting in a smaller total error than previous analysis estimates.
In particular, the tighter error bound suggests that simulating the dynamics of a nearest-neighbor interacting system of $n$ sites for time $t$ up to an error tolerance $\epsilon$ requires only $\max\left\{\O{n^2 t/\epsilon},\O{n^2 t^{3/2}/\epsilon^{1/2}}\right\}$ quantum gates, which is asymptotically smaller than
the state-of-the-art bound $\O{n^2 t^2/\epsilon}$ in Refs.~\cite{Suzuki1985,Huyghebaert1990,ChildsY19}. 
At $t=n$ and at a fixed $\epsilon$, our estimate $\O{n^3}$ also closely matches the empirical gate count $\O{n^{2.964}}$ computed in Ref.~\cite{ChildsMNRS2017}.

\Section{Setup}
We assume that the system evolves under a Hamiltonian $H=\sum_X h_X$, which is a sum of time-independent terms $h_X$, each acting nontrivially on a subset $X$ of constant size.
Our approach applies if there exists a partition $H=H_1+H_2$ such that the terms $h_X$ in $H_1$ mutually commute and the terms $h_X$ in $H_2$ also mutually commute.
Examples of Hamiltonians that satisfy this assumption include all one-dimensional, finite-range~\footnote{
For interactions of maximum range 
$R$, we can group $\ceil{ (R+1)/2 }$ 
consecutive sites into distinct blocks such that the Hamiltonian consists of only interactions between nearest-neighbor blocks. 
The error analysis for using the first-order product formula to simulate such a system would follow from our analysis for simulating nearest-neighbor interactions.
Note, however, that we assume that the exact simulation of the evolution of each constant-size block requires only a constant amount of elementary gates. 
} interacting systems, such as the Heisenberg model and the transverse field Ising model in one dimension with either open or periodic boundary conditions, and with or without disorder.
Additionally, this assumption also covers some physically relevant systems in higher dimensions, such as the transverse field Ising model with either finite-range or long-range interactions.

To simulate the time-evolution of the system for time $t$ using elementary quantum gates, we use the first-order product formula~\cite{Lloyd1996}:
\begin{align}
U_{t} \approx \left[U_{t/r}^{(\o)}U_{t/r}^{(\e)}\right]^{r} ,\label{EQ_Delta_def}
\end{align}
where $U_t\coloneqq \exp(-i H t)$, $U^{(p)}_{t/r} \coloneqq \exp(-i H_{p} t/r)$ for $p=1,2$, and $r$ is the number of time segments to be chosen later so that the norm of the total error ${\Delta} \coloneqq {U_{t}-[U_{t/r}^{(\o)}U_{t/r}^{(\e)}]^{r}}$ is at most a constant $\epsilon$. 
By our assumption that the terms within $H_p$ ($p=1,2$) mutually commute, we can further decompose the evolution $U^{(p)}_{t/r}$ into a product of elementary quantum gates with no additional error.

For simplicity, we demonstrate our approach to estimating the gate count of PF1 on a one-dimensional lattice of $n$ sites, evolving under a time-independent, nearest-neighbor Hamiltonian $H=\sum_{i=1}^{n-1}h_{i},$ where $h_i$ is supported only on sites $i,i+1$, $\Norm{h_i}\leq \J$ for all $i$,
$\J$ is a constant, and $\norm{\cdot}$ denotes the operator norm.
Without loss of generality, we also assume $\J =1$, which sets the time scale for the dynamics of the system. 
We then apply PF1 to the partition $H=H_{\o}+H_{\e}$, where $H_\o = \sum_{\text{odd } j} h_j$ and $H_\e = \sum_{\text{even }j} h_j$.
Note that the terms within $H_\o$ ($H_\e$) mutually commute and therefore satisfy the aforementioned assumption.

\Section{Leading contributions}
To estimate the gate count, we first need a bound on the total error $\Delta$.
The previous best bound from Ref.~\cite{ChildsY19} gives $\norm{\Delta}\leq \O{nt^2/r}$, so that $r = \BigTheta{nt^2/\epsilon}$ suffices to ensure the total error at most $\epsilon$, giving gate count $nr = \O{t^2n^2/\epsilon}$.
Before we prove our tighter bound, we will first argue simply based on the lowest order error that $\norm{\Delta}\approx \O{nt/r}$, which would result in a gate count $\O{tn^2/\epsilon}$, matching the empirical estimate of about $\O{n^3}$ for $t=n$ in Ref.~\cite{ChildsMNRS2017}.

Let $\delta = U_{t/r}-U^{(\o)}_{t/r}U^{(\e)}_{t/r}$ be the error of the approximation in each time segment. 
In the limit $r\gg t$, the leading contribution to $\delta$ is given by the commutator between $H_\o$ and $H_\e$~\cite{Lloyd1996}:
\begin{align}
	\norm{\delta}
	\approx\frac{1}{2}\frac{t^2}{r^2}\norm{\comm{H_\o,H_\e}}= \O{\frac{nt^2}{r^2}}.\label{eq:delta_first_order}
\end{align}
Replacing $U^{(\o)}_{t/r}U^{(\e)}_{t/r}$ by $U_{t/r}+\delta$ on the right-hand side of \cref{EQ_Delta_def} and expanding to first order in $\delta$, we have an approximation for the total error:
\begin{align}
\Delta &\approx \sum_{j=0}^{r-1}U_{t/r}^j~\delta~ U_{t/r}^{r-1-j}
=\left(\sum_{j=0}^{r-1}U_{t/r}^j~\delta~ U_{t/r}^{-j}\right)U_{t/r}^{r-1},\label{EQ_Delta_sum}
\end{align}
where $U_{t/r}^j \coloneqq (U_{t/r})^j$. 
If we bound $\norm{\Delta}$ using the triangle inequality, i.e.,
\begin{align}
\norm{\Delta}
&\approx\Norm{\sum_{j=0}^{r-1}U_{t/r}^j~\delta~ U_{t/r}^{r-1-j}}
\leq r \norm{\delta}
\approx \O{\frac{nt^2}{r}},\label{eq:trianglebound}
\end{align}
we get the same error bound (and hence the same gate count) as Ref.~\cite{ChildsY19}.

To understand the key idea for improving the bound, imagine the unitary evolution $U_{t/r}^j\delta U_{t/r}^{-j}$ as a rotation of $\delta$ by a small angle proportional to $jt/r$. 
\Cref{EQ_Delta_sum} sums over the rotations of $\delta$ by evenly spaced angles.
Therefore, the sum involves significant cancellation, making it much smaller than the upper bound derived using the triangle inequality [\cref{eq:trianglebound}]. 

To realize this intuition, we make a change of variables to $x = tj/r$ and approximate the sum in $\Delta$ by an integral:
\begin{align}
\norm{\Delta} &\approx \Norm{\sum_{j=0}^{r-1}U_{t/r}^j~\delta~ U_{t/r}^{-j}}
\approx\frac{r}{t}\Norm{\int_0^t dx U_{x} ~\delta~ U_{-x}}.\label{EQ_delta_int}
\end{align}
With the assumption that $H = H_1 + H_2$ is a sum of two terms, we rewrite $\delta$ (to leading order in $t/r$) as
\begin{align}
\delta\approx\frac{1}{2}\comm{H_\o,H_\e}\frac{t^2}{r^2} = \frac{1}{2}\comm{H,H_\e}\frac{t^2}{r^2},\label{EQ_delta_comm}
\end{align}
and use the identity
\begin{align}
U_t ~A~U_{-t} - A = -i \int_{0}^t dx U_x\comm{H,A} U_{-x},\label{EQ_int_id}
\end{align}
with $A = \frac{t^2}{2r^2} H_\e $, to evaluate the integral in \cref{EQ_delta_int} and arrive at an estimate for the norm of $\Delta$:
\begin{align}
\norm{\Delta}
&\approx\frac{r}{t}\Norm{\int_0^t dx U_{x} ~\comm{H,\frac{t^2}{2r^2}H_\e}~ U_{-x}}\nonumber\\
&
\leq \frac{t}{2r} 2\norm{H_\e} = \O{\frac{nt}{r}},\label{eq:rough_error_bound}
\end{align}
which is a factor of $t$ tighter than \cref{eq:trianglebound}.
To ensure that the total error $\norm{\Delta}$ is at most $\epsilon$, we choose $r =  \BigTheta {nt/\epsilon}$, leading to the total gate count $\O{nr}= \O {n^2 t/\epsilon}$, which has optimal scaling in $t$~\cite{BerryCK15}.
At $t=n$ and fixed $\epsilon$, the gate count becomes $\O{n^3}$, which closely matches the empirical performance $\O{n^{2.964}}$ observed in Ref.~\cite{ChildsMNRS2017}.

Additionally, if the time step $t/r=\tau$ is a constant, the total error of the simulation $\norm{\Delta} = \O{n\tau}$ appears to be independent of the total number of time segments. 
This feature agrees well with Ref.~\cite{Heyl18}, where the authors argue that for a fixed, small value of $\tau$, the error in simulating the evolution of a local observable using PF1 would not increase with the total simulation time~$t$.    
However, our bound is more general; it applies to the error in simulating the evolution unitary of the system, and hence any observable.

\Section{Higher-order contributions}
We made three approximations in deriving \cref{eq:rough_error_bound}.
First, in \cref{EQ_delta_comm}, we considered $\delta$ to only the leading order in $t/r$ and discarded terms of higher order in $t/r$.
We then expanded $\Delta$ in \cref{EQ_Delta_sum} to only the first order in $\delta$ while ignoring the higher-order terms in $\delta^{k}$. 
Additionally, we evaluated the sum in \cref{EQ_delta_int} by approximating it with an integral.
We now make the estimation rigorous by considering the errors incurred upon making the three approximations.

First, we show that higher-order terms in $t/r$ in the expansion of $\delta$ are indeed dominated by the second order.
For that, we write $\delta$ as a series in $t/r$:
\begin{equation}
\delta  \coloneqq U_{t/r}-U^{(\o)}_{t/r}U^{(\e)}_{t/r}  
=\sum_{k=2}^{\infty}\frac{(-it)^{k}}{k!r^{k}}\delta_{k},\label{EQ_delta_expand}
\end{equation}
where $\delta_k$ are operators independent of $t,r$.
If we only need a bound on the norm of $\delta$, it is sufficient to bound the norms of $\delta_k$.
However, in addition to the norm, we are also interested in the structure of $\delta_k$, described in \cref{TH_delta_k}, which is crucial for evaluating the total error [See \cref{EQ_delta_comm}].

\begin{lemma}
\label{TH_delta_k}
For all $k\geq 2$, there exist $S_k, V_k$ such that $\delta_k = \comm{H,S_k}+V_k$ and
\begin{align}
&\norm{V_k} = \O{e^{k-2} n^{k-2}},\label{EQ_Vk_norm}\\
&\norm{S_k} = \O{k^2 n^{k-1}},\label{EQ_Sk_norm}\\
&\norm{\comm{H,S_k}}  = \O{k^3 n^{k-1}},\label{EQ_HSk_norm}
\end{align}
where the big-$O$ constants do not depend on $k$.
\end{lemma}
\Cref{TH_delta_k} holds for $k = 2$, with $S_2 = H_\e$ and $V_2 = 0$ [See \cref{EQ_delta_comm}].
For $k>2$, we construct $S_k,V_k$ inductively using the definition of $\delta_k$ in \cref{EQ_delta_expand}.
The factor $n^{k-2}$ in the norm of $V_k$ comes from the $(k-2)$-th nested commutators in the expansion of $\delta_k$. 
We provide a detailed proof of the lemma in the Supplemental Material (SM)~\cite{SM}.

A corollary of \cref{TH_delta_k} is $\norm{\delta_k} = \O{e^k n^{k-1}}$, and therefore, we can immediately bound the norm of $\delta$:
\begin{align}
	\norm{\delta}
	&\leq \sum_{k=2}^{\infty} \frac{t^k}{k! r^k}\norm{\delta_k} 
	= \O{\frac{n t^2}{r^2}\sum_{k=0}^\infty \frac{(ent)^k}{k! r^k}}\nonumber\\
	&=\O{\frac{nt^2}{r^2}\exp{\frac{e nt}{r}}}=\O{\frac{n t^2}{r^2}},\label{EQ_norm_delta}
\end{align}
where we assume $r>ent$. We later fulfill this condition by choosing an appropriate value for $r$.

Another corollary of \Cref{TH_delta_k} is that $\delta = \comm{H,S} + V$, where
$
	S = \sum_{k=2}^\infty \frac{(-it)^{k}}{k!r^{k}} S_k$ and $ V = \sum_{k=3}^\infty \frac{(-it)^{k}}{k!r^{k}} V_k.
$
It is straightforward to verify the bounds on the norms of $S$ and $V$:
\begin{equation}
	\norm{S} = \O{\frac{nt^2}{r^2}},\quad
	\norm{V} = \O{\frac{n t^3}{r^3}},\label{EQ_norm_SV}
\end{equation}
where we again assume $r>ent$.

Next, we rectify the approximation in \cref{EQ_delta_int} by rigorously bounding the norm of the sum. 
\begin{lemma}
\label{LEM_sum_delta}
For any positive integer $a\geq 1$,
\begin{align}
	\Norm{\sum_{j=0}^{a-1} U^{j}_{t/r} \delta \: U^{-j}_{t/r} } = \O{\frac{nt}{r}}+\O{a\frac{nt^3}{r^3}}. \label{eq:sumboundmain}
\end{align}
\end{lemma}
When $a = r$, the left-hand side of \cref{eq:sumboundmain} is exactly the sum in \cref{EQ_delta_int}.
We bound the sum by approximating it with an integral, which yields $\O{nt/r}$ after evaluation.
Carefully bounding the error of the approximation results in the second term $\O{ant^3/r^3}$.
We present the detailed proof of the lemma in the SM~\cite{SM}.

Given \cref{TH_delta_k} and \cref{LEM_sum_delta}, we now bound the total error $\norm\Delta$. 
We expand $\Delta$ as a series in $\delta$ and write 
$
\Delta=\sum_{k=1}^{r}\Delta_{k},
$
where $\Delta_{k}$ involves only the $k$-th order in $\delta$. 
For example, $\Delta_{1} = 	\sum_{j=0}^{r-1} U_{t/r}^{j} \delta \: U_{t/r}^{-j}$, the norm of which we can already bound using \cref{LEM_sum_delta}. 
We can use the same technique to estimate $\norm {\Delta_k}$ for all $k\geq 1$~\cite{SM}: 
\begin{align}
\norm{ \Delta_{k}} & \leq r^{k-1}\left\Vert \delta\right\Vert ^{k-1}\O{\frac{nt}{r}+\frac{nt^3}{r^2}}.\label{LEM_Delta_k_norm}
\end{align}
Finally, we bound $\norm\Delta$ using the triangle inequality:

\begin{align}
\norm{\Delta} &\leq  \sum_{k=1}^{r}\norm{\Delta_{k}}  
=\O{\frac{nt}{r}+\frac{nt^3}{r^2}},\label{EQ_norm_Delta}
\end{align}
where we assume $r\left\Vert \delta\right\Vert <1/2$ so that $\sum_{k=1}^r (r\norm{\delta})^{k-1} = \O{1}$. 
With our choice of $r$, this assumption later reduces to $\epsilon t\leq 1$, where $\epsilon$ is the error tolerance of the simulation.

\begin{figure}[t]
\centering
\includegraphics[width=0.45\textwidth]{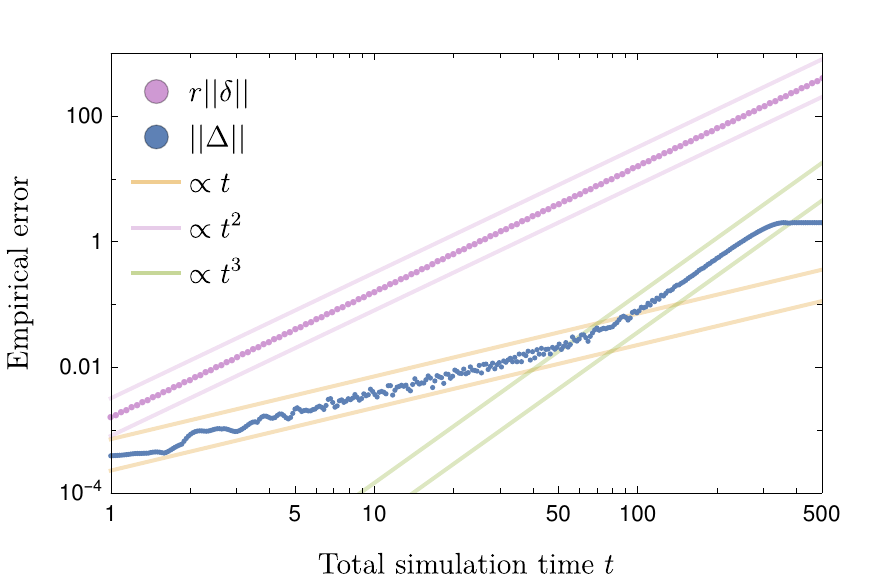}
\caption{The total error $\norm\Delta$ (blue dots) of PF1 in simulating the Heisenberg chain in \cref{EQ_Heisenberg} is numerically evaluated at $n=8$, $r=10000$, and variable time $t$ between 0 and 1000.
The purple dots represent the error estimate $r\norm{\delta}$ one would get using the triangle inequalities [\cref{eq:trianglebound}]. 
We also plot functions proportional to $t$ (orange lines), $t^2$ (purple lines), and $t^3$ (green lines) for reference.
}
\label{fig:error_scaling}
\end{figure}

\Section{Empirical error scaling}
We now benchmark the bound in \cref{EQ_norm_Delta} against the empirical error in simulating the dynamics of a nearest-neighbor Heisenberg chain:
\begin{equation}
H=\sum_{i=1}^{n-1}\vec \sigma_i \cdot \vec \sigma_{i+1}, \label{EQ_Heisenberg}
\end{equation}
where $\vec \sigma_i = (\sigma^x_i,\sigma^y_i,\sigma^z_i)$ denotes the Pauli matrices on qubit $i$. 
Using fixed values for $n$ and $r$, we compute the total error of PF1 at different times $t$ and plot the result in \cref{fig:error_scaling}.
We also plot in \cref{fig:error_scaling_higher} the empirical errors of simulating the same system using the second-order (PF2) and the fourth-order (PF4) product formulae~\cite{BerryACS07}. 

From \cref{fig:error_scaling}, the total error of PF1 appears to agree well with our bound in \cref{EQ_norm_Delta}. 
The change in the error scaling from $O(t)$ at small time to $O(t^3)$ at large time can be explained by the destructive error interference between the time slices as follows. 
While the leading error terms in each time slice scale as $O(t^2)$, they interfere destructively between time slices, resulting in a total contribution that increases with time at a slower rate $O(t)$ [recall \cref{eq:rough_error_bound}].
Meanwhile, some higher-order error terms do not interfere destructively. 
They scale as $O(t^3)$ and eventually take over as the primary contribution to the total error.
This intuition also explains the similarity between the error scalings of PF1 (at late time) and PF2 [\cref{fig:error_scaling_higher}]. 
On the other hand, if there were no destructive error interference between the time slices, the contribution from the leading error terms to the total error of PF1 would have scaled as $O(t^2)$ [\cref{fig:error_scaling}, purple dots] and saturated at 2 before the higher-order terms could take over.

We also note that the error of PF2 [PF4] scales as $t^3$ [$t^5$] initially before saturating at a later time, in agreement with the existing bounds using triangle inequalities for the higher-order product formulae~\cite{ChildsMNRS2017,ChildsY19}.   
Therefore, the destructive interference of the errors between the time segments appears to be a unique feature of the first-order product formula.

\begin{figure}[t]
\centering
\includegraphics[width=0.45\textwidth]{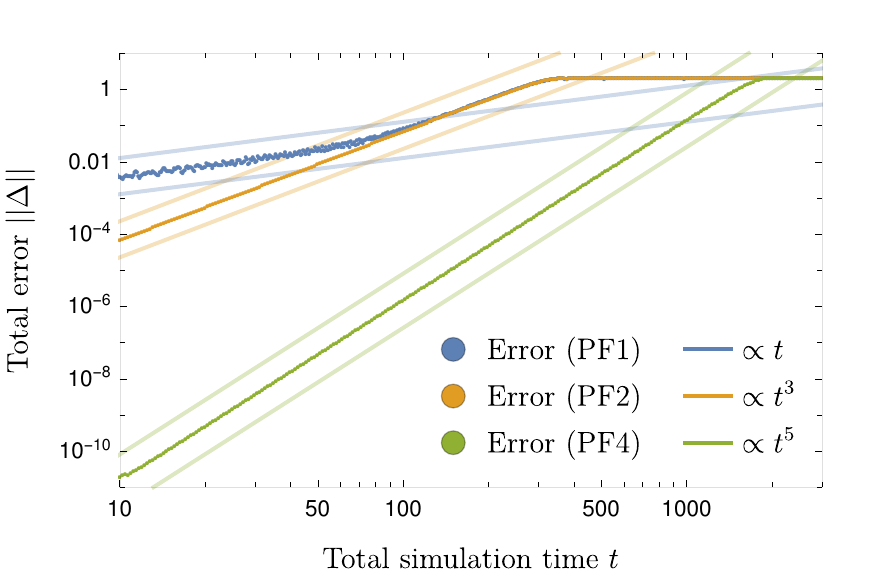}
\caption{The total error of simulating the Heisenberg chain with $n=8$ spins in \cref{EQ_Heisenberg} using PF1 (blue dots), PF2 (orange dots) and PF4 (green dots) is numerically computed at $r=10000$, and variable time $t$ between 10 and 3000.
We also plot functions proportional to $t$ (blue lines), $t^3$ (orange lines), and $t^5$ (green lines) for reference.
}
\label{fig:error_scaling_higher}
\end{figure}

\Section{Gate count}
Given the error bound in \cref{EQ_norm_Delta}, we now count the number of gates for PF1. 
\Cref{EQ_norm_Delta} suggests we should choose
\begin{align}
	r\propto \max\left\{\frac{nt}{\epsilon},\sqrt{\frac{nt^3}{\epsilon}},1\right\},\label{eq:r_two_choices}
\end{align}
so that the total error $\norm\Delta$ is at most $\epsilon$.
First, we assume $nt\geq \epsilon$ and consider two cases, corresponding to $\epsilon t\leq 1$ (small time) and $\epsilon t>1$ (large time).
The former condition implies that
the first term in \cref{eq:r_two_choices} dominates
and therefore we should choose
$
	r =\BigTheta{{nt}/{\epsilon}}.
$
This choice of $r$ together with $\epsilon t\leq 1$ also fulfills the condition $r\norm{\delta}<1/2$ required earlier, as long as we choose a large enough prefactor in $\BigTheta{{nt}/{\epsilon}}$.
Thus, when $\epsilon t\leq 1$, the gate count of PF1 is 
\begin{align}
	\O{rn} = \O{\frac{n^2t}{\epsilon}}.\label{EQ_GC_less}
\end{align}

On the other hand, when $\epsilon t> 1$, we divide the simulation into $m$ stages.
In each stage, we simulate the evolution for time $t/m$ with an error at most $\epsilon/m$ by further dividing the stage into $r$ time segments.
In order to apply the above analysis in each stage, we require $m$ to be large enough so that $\epsilon t/m^2\leq 1$.
Since the resulting gate count $\O{mn^2t/\epsilon}$ increases with $m$, it is optimal to choose $m$ as small as possible, i.e. $m = \ceil{\sqrt{\epsilon t}}$. 
Therefore, the total gate count in this case is 
\begin{align}
	\O{\sqrt{\epsilon t}\frac{n^2 t}{\epsilon}}=\O{\frac{n^2 t^{3/2}}{\epsilon^{1/2}}}.	\label{EQ_GC_more}
\end{align}
Finally, when $nt<\epsilon$, we simply choose $r = \Theta(1)$, giving gate count $\O{1}$. 
Combining the above arguments, we have an upper bound on the total gate count of
\begin{align}
	\max\left\{\O{\frac{n^2t}{\epsilon}},\O{\frac{n^2 t^{3/2}}{\epsilon^{1/2}}},\O{1}\right\},\label{eq:total_gate_count}
\end{align}
which is valid for all times $t$ and is tighter than the previous best estimate in Ref.~\cite{ChildsY19}.

\Section{Discussion \& Outlook}
As mentioned earlier, we assume that the terms of the Hamiltonian can be separated into two parts such that the terms within each part mutually commute. 
Therefore, our results apply to translationally invariant spin chains in one dimension with finite-range interactions and with either open or periodic boundary conditions, as well as disordered spin chains, such as those featuring many-body localization~\cite{Pal2010}. 
Additionally, our analysis also holds for some systems in higher dimensions, such as the transverse field Ising model with either finite-range or long-range interactions, where the two mutually commuting parts of the Hamiltonian are the spin-spin interactions and the field terms.
However, for long-range interactions, the number of interaction terms can scale as $\O{n^2}$ [instead of $\O{n}$ for the finite-range interactions], so the scalings of the error bound and of the gate count as functions of $n$ must be adjusted accordingly. 
Furthermore, our technique can also be used to bound the error in simulating materials where the electronic structure Hamiltonian in the plane wave dual basis~\cite{Babbush2018} is a sum of mutually commuting kinetic energy terms and Coulomb interactions.

However, it is unclear whether our approach generalizes to Hamiltonians that can only be separated into three or more mutually commuting parts, such as those that typically occur in higher dimensions and systems with general long-range interactions, where the simple relation between $\delta$ and $H$ in \cref{EQ_delta_comm} no longer holds in general. 
In addition, although our main focus in this paper is on real-time simulation, it would be interesting to consider the implications of our bound for the error of the product formula in simulating imaginary time evolution, which is relevant for path integral Quantum Monte Carlo algorithms~\cite{Sandvik2010}. 

We also note that while our analysis requires $r \norm\delta<1/2$, our numerical calculation [see \cref{fig:error_scaling}] shows that our error bound agrees well with the empirical scaling even at large values of $t$, where $r \norm \delta \gg 1/2$.
Therefore, we conjecture that the error bound in \cref{EQ_norm_Delta} is valid regardless of whether $\epsilon t$ is less than one. 
If the conjecture holds, \cref{eq:r_two_choices} implies that we should choose $r\propto nt/\epsilon$ and $r\propto \sqrt{nt^3/\epsilon}$ for $\epsilon t\leq n$ and $\epsilon t> n$, respectively (in the limit of large $n$ and $t$). 
The former choice yields the same gate count $\O{n^2t/\epsilon}$ as in \cref{EQ_GC_less},
but the latter choice leads to a gate count of $\O{nr} = \O{\sqrt{n^3t^3/\epsilon}}$, which is tighter than the estimate in \cref{EQ_GC_more}.
Thus, the conjecture would imply that PF1 performs as well as PF2---whose gate count is also $\O{\sqrt{n^3t^3/\epsilon}}$~\cite{ChildsY19}---in the large-time limit.
We consider proving the conjecture a very interesting future direction.

\begin{acknowledgments}
	\Section{Acknowledgments}
	MCT, SKC, and AVG acknowledge funding from DoE ASCR FAR-QC (award No. DE-SC0020312), DoE BES Materials and Chemical Sciences Research for Quantum Information Science program (award No. DE-SC0019449), DoE ASCR Quantum Testbed Pathfinder program (award No. DE-SC0019040), NSF PFCQC program, ARO MURI, AFOSR, ARL CDQI, and NSF PFC at JQI.
	AMC and YS acknowledge funding from ARO MURI, NSF (Grant No. CCF-1813814), and the U.S.\ Department of Energy, Office of Science, Office of Advanced Scientific Computing Research, Quantum Algorithms Teams and Quantum Testbed Pathfinder programs (Award No. DE-SC0019040).
	MCT is supported in part by the NSF Grant No.~NSF PHY-1748958 and the Heising-Simons Foundation.
	SKC also acknowledges the support from the Studying Abroad Scholarship by Ministry of Education in Taiwan (R.O.C.).
\end{acknowledgments}

\bibliographystyle{apsrev4-1}
\bibliography{product-formula}

\end{document}


\title{Supplemental Material for: ``Destructive Error Interference in Product-Formula Lattice Simulation''}

\author{Minh~C.~Tran}
\affiliation{Joint Center for Quantum Information and Computer Science, NIST/University of Maryland, College Park, Maryland 20742, USA}
\affiliation{Joint Quantum Institute, NIST/University of Maryland, College Park, Maryland 20742, USA}
\affiliation{Kavli Institute for Theoretical Physics, University of California, Santa Barbara, California 93106, USA}
\author{Su-Kuan Chu}
\affiliation{Joint Center for Quantum Information and Computer Science, NIST/University of Maryland, College Park, Maryland 20742, USA}
\affiliation{Joint Quantum Institute, NIST/University of Maryland, College Park, Maryland 20742, USA}
\author{Yuan~Su}
\affiliation{Joint Center for Quantum Information and Computer Science, NIST/University of Maryland, College Park, Maryland 20742, USA}
\affiliation{Department of Computer Science, University of Maryland, College Park, Maryland 20742, USA}
\affiliation{Institute for Advanced Computer Studies, University of Maryland, College Park, Maryland 20742, USA}
\author{Andrew~M.~Childs}
\affiliation{Joint Center for Quantum Information and Computer Science, NIST/University of Maryland, College Park, Maryland 20742, USA}
\affiliation{Department of Computer Science, University of Maryland, College Park, Maryland 20742, USA}
\affiliation{Institute for Advanced Computer Studies, University of Maryland, College Park, Maryland 20742, USA}
\author{Alexey~V.~Gorshkov}
\affiliation{Joint Center for Quantum Information and Computer Science, NIST/University of Maryland, College Park, Maryland 20742, USA}
\affiliation{Joint Quantum Institute, NIST/University of Maryland, College Park, Maryland 20742, USA}
\affiliation{Kavli Institute for Theoretical Physics, University of California, Santa Barbara, California 93106, USA}\date{\today}
\maketitle

The Supplemental Material provides more mathematical details for the derivations of the error bound in the paper.
Specifically, \cref{APP_delta_norm_proof} explains how we write the $k$-th order error $\delta_k$ into a commutator. 
\Cref{APP_sum_delta_proof} provides an upper bound for a sum of different evolutions of $\delta$. 
Finally, in \cref{APP_Delta_k_norm_proof}, we show how we bound the norm of $\Delta_k$ in \cref{LEM_Delta_k_norm}.

\section{Structure of $\delta_k$}
\label{APP_delta_norm_proof}
In this section, we present the proof of \Cref{TH_delta_k}, which says that we can write $\delta_k$ into a sum of a commutator and an operator of higher order. 
First, we need the following recursive relation between the $\delta_k$ operators.
\begin{lemma}
\label{lemma:deltarecursion}
For $k\geq2$, we have the following recursive relation:
\begin{align}
	\delta_{k+1}=H_{\o}\delta_{k}+\delta_{k}H_{\e}-[H^{k},H_{\e}].
\end{align}
	
\end{lemma}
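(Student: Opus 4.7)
The plan is to unpack $\delta_k$ as the $k$-th Taylor coefficient of the difference between the first-order product formula $e^{-itH_\o}e^{-itH_\e}$ and the exact evolution $e^{-itH}$ with $H=H_\o+H_\e$, and then derive the recursion by an elementary binomial manipulation. After expanding both exponentials, the $k$-th coefficient of the product formula is $S_k:=\sum_{j=0}^{k}\binom{k}{j}H_\o^{j}H_\e^{k-j}$, with all $H_\o$ factors standing to the left of all $H_\e$ factors, while the $k$-th coefficient of the exact evolution is $H^{k}=(H_\o+H_\e)^{k}$ with no ordering constraint. Thus $\delta_k=H^{k}-S_k$ (up to the sign convention fixed in the main text).

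First I would establish the clean recursion $S_{k+1}=H_\o S_k+S_k H_\e$ by applying Pascal's identity $\binom{k+1}{j}=\binom{k}{j-1}+\binom{k}{j}$ to the sum defining $S_{k+1}$ and reindexing: the $\binom{k}{j-1}$ contributions factor out $H_\o$ on the left, while the $\binom{k}{j}$ contributions factor out $H_\e$ on the right. On the exact side, $H^{k+1}=H_\o H^k+H_\e H^k$ is immediate.

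Subtracting the two recursions yields $\delta_{k+1}=H_\o(H^k-S_k)+H_\e H^k-S_k H_\e$. The first group is already $H_\o\delta_k$; to extract $\delta_k H_\e=H^k H_\e-S_k H_\e$ from the remainder, I would add and subtract $H^k H_\e$, leaving the commutator $H_\e H^k-H^k H_\e=-[H^k,H_\e]$. This produces the claimed identity $\delta_{k+1}=H_\o\delta_k+\delta_k H_\e-[H^k,H_\e]$.

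The main obstacle is purely bookkeeping rather than conceptual: the product formula imposes a fixed $H_\o$-before-$H_\e$ ordering at every order, whereas the exact $H^k$ mixes the two freely, so one has to track this ordering carefully when comparing the two recursions. The commutator $-[H^k,H_\e]$ is precisely the obstruction to the two orderings agreeing at order $k+1$, which is conceptually why it must appear as the inhomogeneous term in the recursion.
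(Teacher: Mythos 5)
Your proposal is correct and follows essentially the same route as the paper: both expand the two exponentials, identify $\delta_k$ as the difference $H^k - A_k$ where $A_k=\sum_{j}\binom{k}{j}H_\o^{j}H_\e^{k-j}$, use the recursion $A_{k+1}=H_\o A_k + A_k H_\e$, and extract the commutator $-[H^k,H_\e]$ as the reordering term. The only cosmetic difference is that the paper substitutes $B_k=A_k+\delta_k$ directly into $H B_k$ rather than subtracting the two recursions, which amounts to the same algebra.
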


\begin{proof}[Proof]
We prove the lemma by expanding both $U_{t/r}$ and $U_{t/r}^{(\o)}U_{t/r}^{(\e)}$ in orders of $t/r$:
\begin{align}
	&U_{t/r}^{(\o)}U_{t/r}^{(\e)} = e^{-iH_{\o}t/r}e^{-iH_{\e}t/r}=\sum_{k=0}^{\infty}\frac{1}{k!}A_{k}\left(\frac{-it}{r}\right)^{k},\\
	&U_{t/r} = e^{-iHt/r}=\sum_{k=0}^{\infty}\frac{1}{k!}B_{k}\left(\frac{-it}{r}\right)^{k},
\end{align}
where
\begin{align}
	A_{k}\coloneqq\sum_{j=0}^{k}\binom{k}{j} H_{\o}^{j}H_{\e}^{k-j},\:\:\:\:\:
	B_{k}\coloneqq H^{k}=(H_{\o}+H_{\e})^{k}.
\end{align}
With these notations, we have the relation $\delta_{k}=B_{k}-A_{k}$. 
It is also straightforward to verify the recursive relations for $A_k$ and $B_k$:
\begin{align}
	A_{k+1}&=H_{\o}A_{k}+A_{k}H_{\e},\\
	B_{k+1} & =H^{k+1}=HB_k=(H_{\o}+H_{\e})(A_{k}+\delta_{k})\nonumber \\
	& =H_{\o}A_{k}+H_{\o}\delta_{k}+B_{k}H_{\e}-[B_{k},H_{\e}]\nonumber \\
	& =H_{\o}A_{k}+H_{\o}\delta_{k}+(A_{k}+\delta_{k})H_{\e}-[B_{k},H_{\e}]\nonumber \\
	& =(H_{\o}A_{k}+A_{k}H_{\e})+H_{\o}\delta_{k}+\delta_{k}H_{\e}-\left[H^{k},H_{\e}\right]\nonumber \\
	& =A_{k+1}+H_{\o}\delta_{k}+\delta_{k}H_{\e}-\left[H^{k},H_{\e}\right].
\end{align}
By definition, we have	
	\begin{align}
	\delta_{k+1}=B_{k+1}-A_{k+1}=H_{\o}\delta_{k}+\delta_{k}H_{\e}-\left[H^{k},H_{\e}\right].
	\end{align}
Therefore, the lemma follows.
\end{proof}

We now construct the operators $S_k,V_k$ in \Cref{TH_delta_k} inductively on $k$. 
For $k = 2$, we have $\delta_2 = \comm{H,H_\e}$.
Thus \Cref{TH_delta_k} is true for $k = 2$ with $S_2 = H_\e$ and $V_2 = 0$.
Assume that \Cref{TH_delta_k} is true up to $k$, i.e. there exist $S_k,V_k$ such that $\delta_k = \comm{H,S_k}+V_k$, we shall prove that it is also true for $k+1$.
Using \Cref{lemma:deltarecursion}, we have
\begin{align}
\delta_{k+1} & =H_{\o}\delta_{k}+\delta_{k}H_{\e}-[H^{k},H_{\e}]\nonumber \\
& =[H_{\o},\delta_{k}]+\delta_{k}H-[H^{k},H_{\e}]\nonumber \\
& =[H_{\o},[H,S_{k}]+V_{k}]+V_{k}H+[H,S_{k}]H-[H^{k},H_{\e}].
\end{align}

We use the following commutator identities: 
\begin{align}
	[H,S_{k}]H&=[H,S_{k}H],\\
	[H^{k},H_{\e}] & =[H,\sum_{j=0}^{k-1}H^{k-1-j}H_{\e}H^{j}].
\end{align}
With some trivial manipulations, we can write $\delta_{k+1} = [H,S_{k+1}]+V_{k+1}$, where 
\begin{align}
	&S_{k+1} = S_{k}H-\sum_{j=0}^{k-1}H^{k-1-j}H_{\e}H^{j},\label{EQ_Sk_recursive}\\
	&V_{k+1} = [H_{\o},[H,S_{k}]]+H_{\o}V_{k}+V_{k}H_{\e}.\label{EQ_Vk_recursive}
\end{align}
Finally, we show that the operators $S_k,V_k$ constructed using the above recursive relations satisfy the norm bounds in \cref{EQ_Vk_norm,EQ_Sk_norm,EQ_HSk_norm}. 
We need the following lemma about the structure of $S_k,V_k$.
\begin{lemma}
\label{lemma:boundSV}
For integer $k\geq2$, the operators $S_k,V_k$ constructed from \cref{EQ_Sk_recursive,EQ_Vk_recursive} can be written as
\begin{align}
	V_{k}=\sum_{i=1}^{n_{k}}v_{k,i},\quad n_{k}\leq \xi e^{k-2} n^{k-2},\label{EQ_Vk_sum}
\end{align}
\begin{align}
	S_{k}=\sum_{i=1}^{m_{k}}s_{k,i},\quad m_{k}\leq\frac{k(k-1)}{2}n^{k-1}, \label{EQ_Sk_sum}
\end{align}
where $\xi$ is a constant, $v_{k,i},s_{k,i}$ are operators supported on at most $2(k-1)$ sites and 
\begin{align}
	\norm{s_{k,i}}\leq 1,\quad
	\norm{v_{k,i}}\leq 1,
\end{align}
for all $i$.
\end{lemma}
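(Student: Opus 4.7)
The plan is to induct on $k$, propagating the local decomposition through the recursions \cref{EQ_Sk_recursive,EQ_Vk_recursive}. For the base case $k=2$, $S_2 = H_\e = \sum_\gamma h_\gamma^{(\e)}$ already has the claimed form with $m_2 \leq n$, each $s_{2,i}$ of norm $\leq 1$ and support $\leq 2 = 2(k-1)$; and $V_2 = 0$ gives $n_2 = 0$ trivially. For the inductive step controlling $m_{k+1}$, I would plug $S_k = \sum_i s_{k,i}$ into \cref{EQ_Sk_recursive}: the first piece $S_k H$ expands into at most $m_k \cdot n$ products $s_{k,i} h_\gamma$, while $\sum_{j=0}^{k-1} H^{k-1-j} H_\e H^j$ expands into $k$ groups of at most $n^k$ products of $k$ local Hamiltonian terms each. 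Every such product has norm at most $1$ by sub-multiplicativity and support at most $2k$ sites, so the recursion gives $m_{k+1} \leq m_k n + k n^k \leq \tfrac{k(k+1)}{2} n^k$, closing the $m$-induction.

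The heart of the argument is the inductive step for $n_{k+1}$ from \cref{EQ_Vk_recursive}. The multiplicative pieces $H_\o V_k$ and $V_k H_\e$ contribute at most $2n\, n_k$ local operators of support $\leq 2k$ and norm $\leq 1$; that part is a routine product count. The double commutator $[H_\o,[H,S_k]]$ is where locality must be exploited: since each $s_{k,i}$ is supported on at most $2(k-1)$ sites and each site participates in only $O(1)$ Hamiltonian terms, only $O(k)$ terms of $H$ have nonzero commutator with $s_{k,i}$, and each resulting operator is supported on $\leq 2k-1$ sites. Applying $[H_\o,\cdot]$ once more multiplies the count by another $O(k)$ and raises the support to $\leq 2k$. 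Using $\norm{[A,B]} \leq 2\norm{A}\norm{B}$ at each level, every resulting operator has norm $\leq 4$, which I rewrite as $4$ unit-norm operators to absorb the prefactor into the term count. Summing over $i$, $[H_\o,[H,S_k]]$ contributes at most $C k^2 m_k$ unit-norm local terms for an absolute constant $C$ depending only on the lattice geometry.

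Combining these pieces and invoking the inductive hypotheses gives
\begin{equation*}
n_{k+1} \leq 2n\, n_k + C k^2 m_k \leq 2\xi e^{k-2} n^{k-1} + \tfrac{C}{2} k^3 (k-1)\, n^{k-1},
\end{equation*}
which is at most $\xi e^{k-1} n^{k-1}$ provided $(e-2)\xi e^{k-2} \geq \tfrac{C}{2} k^3(k-1)$. The right-hand side $\tfrac{C k^3(k-1)}{2\,e^{k-2}}$ is bounded on $k \geq 2$, so a sufficiently large absolute constant $\xi$ closes the induction uniformly in $n$ and $k$. The main obstacle I expect is the double-commutator counting step: a naive product count would yield $O(n^2)\, m_k$ terms, too weak by a factor of $n^2/k^2$, and one genuinely needs both the locality of each $s_{k,i}$ and the bounded coordination number of the lattice to reduce each factor of $n$ to a factor of $k$ in that count.
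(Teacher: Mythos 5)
Your proof is correct and takes essentially the same route as the paper's: induction on $k$ through the recursions, with the product count $m_{k+1}\leq m_k n + k n^k$ for $S_{k+1}$ and the locality-based commutator count for $V_{k+1}$ (each $s_{k,i}$, supported on $2(k-1)$ sites, fails to commute with only $O(k)$ Hamiltonian terms), absorbing the polynomial-in-$k$ growth into the exponential by taking $\xi$ large. The only differences are cosmetic: the paper writes the explicit factors $2(2k-1)\cdot 4(k-1)$ and $\xi = 2048/(e^2(e-1))$ where you keep unspecified constants, and it uses $n\,n_k$ rather than your slightly looser $2n\,n_k$ for the multiplicative pieces.
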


\begin{proof}[Proof]
Denote by $\supp{X}$ the support size of an operator $X$, i.e. the number of sites $X$ acts nontrivially on. 
We say that the number of terms in $V_k$ is $x$ if there exists a decomposition $V_k = \sum_{j=1}^x v_j$ such that $\norm{v_j}\leq 1$ for all $j$. 
For $k = 2$, the lemma is true by definition.
Assume that the lemma is true up to some $k\geq 2$, we shall prove that it holds for $k+1$.

First, we argue for the bounds on the number of terms $m_{k+1},n_{k+1}$ in $S_{k+1},V_{k+1}$ respectively.
Since there are $m_k$ terms in $S_k$, using \cref{EQ_Sk_recursive}, it is straightforward to bound $m_{k+1}$---the number of terms in $S_{k+1}$:
\begin{align}
	m_{k+1}\leq m_kn + k n^{k} \leq \frac{k(k-1)}{2}n^k + kn^k = \frac{k(k+1)}{2}n^{k}.
\end{align}
To bound $n_{k+1}$, the number of terms in $V_{k+1}$, we use \cref{EQ_Vk_recursive} and note that $s_{k,i}$ can non-commute with at most $2\supp{s_{k,i}} = 4(k-1)$ terms from $H$.
Therefore, the number of terms in $\comm{H,S_k}$ is at most $4(k-1)m_k$.
Each of these terms has its support size increased by at most one (to $2k-1$) compared to the terms of $S_k$.
Repeating the argument for $\comm{H_1,\comm{H,S_k}}$, the number of terms in $V_{k+1}$ can be bounded as follow:
\begin{align}
	n_{k+1} &\leq 2(2k-1)4(k-1)m_k + n n_k\\ 
	&\leq 8k^4 n^{k-1} + \xi e^{k-2} n^{k-1}\\
	&< 2\xi e^{k-2} n^{k-1} 
	< \xi e^{k-1} n^{k-1},
\end{align} 
where $\xi = \frac{2048}{e^2(e-1)}$ and we have used the fact that $8k^4 + \xi e^{k-2} < \xi e^{k-1}$ for all $k\geq 2$.
Therefore, the number of terms $n_{k+1},m_{k+1}$ are bounded according to \cref{EQ_Vk_sum,EQ_Sk_sum}.

It is also apparent from this construction that each iteration in \cref{EQ_Sk_recursive,EQ_Vk_recursive} increases the support size of the constituent terms in $S_k,V_k$ by at most 2. 
Therefore, \Cref{lemma:boundSV} follows.
\end{proof}
With Lemma \ref{lemma:boundSV}, it is straightforward to show that the norms of $V_k,S_k,\comm{H,S_k}$ are upper bounded by the their number of terms:
\begin{align}
	\norm{V_k} &\leq n_k  = \O{e^{k-2} n^{k-2}}\\
	\norm{S_k} &\leq m_k  = \O{k^2 n^{k-1}},\\
	\norm{\comm{H,S_k}} &\leq 4(k-1) m_k  = \O{k^3 n^{k-1}}.
\end{align} 
These bounds complete the proof of \Cref{TH_delta_k}.

\section{Sum of evolutions of \texorpdfstring{$\delta$}{delta}}
\label{APP_sum_delta_proof}
In this section, we present the proof of \cref{LEM_sum_delta}, which provides an upper bound for the sum of evolution of an operator with different times.
\begin{proof}
We denote by $\tau \coloneqq t/r$ and 
\begin{align}
	\Sigma_a(X) \coloneqq \sum_{j=0}^{a-1} U_{j\tau} \comm{H,X} U_{j\tau}^\dag \tau,
\end{align}
where 
$X$ is an arbitrary time-independent operator, $a$ is a positive integer, and $U_t = \exp(-iHt)$ as before. 

First, we need to turn the sum $\Sigma_a(X)$ into a sum of several integrals using the following lemma.
\begin{lemma}
\label{LEM_Sigma_Sum}
Define 
\begin{align}
F[X]&\coloneqq -\frac{1}{\tau}\int_0^{\tau} ds \int_0^{s} dv U_{v}\comm{H,X}U_{v}^\dag,\\
I_t(X)&\coloneqq \int_0^t U_{s} \comm{H,X} U_{s}^\dag ds.
\end{align}
For all $\tau$ such that $n\tau<1$, where $n$ is the number of sites in the system, we have 
\begin{align}
\Sigma_a(X) = \sum_{k=0}^\infty I_{a\tau}(F^{\circ k}[X])), \label{EQ_Sigma_sum}
\end{align}
where $F^{\circ k}$ the $k$-th iterate of a function $F$, i.e. the composition $F^{\circ k}[X] = F[F[\dots F[X]\dots]]$, with $F^{\circ 0}$ being the identity function.
\end{lemma}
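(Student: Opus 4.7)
The plan is to prove \cref{EQ_Sigma_sum} by establishing a single-step recursion of the form $\Sigma_a(X) = I_{a\tau}(X) + \Sigma_a(F[X])$ and iterating it, with the infinite sum justified by a contraction estimate $\|F[X]\| \lesssim n\tau\|X\|$ that invokes the hypothesis $n\tau < 1$.

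First, I would reduce the question to a per-interval identity. Splitting the integral as $I_{a\tau}(X) = \sum_{j=0}^{a-1}\int_{j\tau}^{(j+1)\tau} U_s[H,X]U_s^\dag\, ds$, changing variables $s = j\tau + u$ on each subinterval, and using $[H,U_{j\tau}] = 0$ to factor out $U_{j\tau}(\cdot)U_{j\tau}^\dag$, one obtains $I_{a\tau}(X) = \sum_{j=0}^{a-1} U_{j\tau} I_\tau(X) U_{j\tau}^\dag$. Combined with the manifest form $\Sigma_a(X) = \sum_{j=0}^{a-1} U_{j\tau}(\tau[H,X])U_{j\tau}^\dag$, the recursion then reduces to a single-interval identity relating $\tau[H,X] - I_\tau(X)$ to $\tau[H,F[X]]$.

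The local identity I would derive by two applications of the fundamental theorem of calculus: writing $U_s[H,X]U_s^\dag - [H,X] = \int_0^s \partial_v(U_v[H,X]U_v^\dag)\,dv$ and computing the derivative as a conjugated nested commutator proportional to $U_v[H,[H,X]]U_v^\dag$. Integrating a second time over $s \in [0,\tau]$ reproduces the very double integral that defines $\tau[H,F[X]]$ from the statement of $F$ (up to conventional factors of $-i$ that are naturally absorbed by a Liouvillian-style reformulation $\mathcal{L}[\cdot] = -i[H,\cdot]$). Summing over $j$ with the conjugations $U_{j\tau}(\cdot)U_{j\tau}^\dag$ then delivers the one-step recursion.

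Finally, iterating $K$ times yields $\Sigma_a(X) = \sum_{k=0}^{K-1} I_{a\tau}(F^{\circ k}[X]) + \Sigma_a(F^{\circ K}[X])$, and to pass to $K\to\infty$ I would observe from the definition of $F$ the crude bound $\|F[X]\| \leq \tfrac{\tau}{2}\|[H,X]\| \leq \tau\|H\|\|X\|$, together with $\|H\| = O(n)$ for the local lattice Hamiltonian. Hence $\|F^{\circ K}[X]\| \leq (Cn\tau)^K\|X\| \to 0$ under $n\tau<1$, and since $\|\Sigma_a(Y)\| \leq 2a\tau\|H\|\|Y\|$, the remainder vanishes and the infinite series converges to $\Sigma_a(X)$. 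The main obstacle, I expect, is the commutator algebra in the second step: the derivative, nested commutator, and the defining double integral of $F$ must line up exactly, not merely at the level of order-of-magnitude; once that identification is secured the iteration and convergence become a routine geometric-series argument.
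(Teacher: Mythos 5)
Your proposal is correct and follows essentially the same route as the paper: split $I_{a\tau}$ into per-interval conjugates, derive the one-step recursion $\Sigma_a(X) = I_{a\tau}(X) + \Sigma_a(F[X])$ via the fundamental theorem of calculus applied to $U_v\comm{H,X}U_v^\dag$, and iterate. Your explicit treatment of the vanishing remainder $\Sigma_a(F^{\circ K}[X])$ is a slightly more careful version of the convergence argument the paper defers to \cref{LEM_Sigma}.
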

\begin{proof}
To prove the claim, we note that 
\begin{align}
	I_{a\tau}(X) = \int_0^{a\tau} U_s\comm{H,X}U_s^\dag ds 
	= \sum_{j=0}^{a-1} \int_{j\tau}^{(j+1)\tau} U_s\comm{H,X}U_s^\dag ds
	= \sum_{j=0}^{a-1} U_{j\tau }\left(\int_{0}^{\tau} U_s\comm{H,X}U_s^\dag ds \right)U^\dag_{j\tau }.
\end{align}
Therefore, we have
\begin{align}
	\Sigma_a(X) - I_{a\tau}(X) 
	&= \sum_{j=0}^{a-1}U_{j\tau}\left(\comm{H,X}\tau - \int_0^{\tau} U_s \comm{H,X}U_s^\dag ds\right)U_{j\tau}^\dag \nonumber\\
	&= \sum_{j=0}^{a-1}U_{j\tau}\int_0^{\tau} ds\left(\comm{H,X} - U_s \comm{H,X}U_s^\dag \right)U_{j\tau}^\dag\nonumber\\ 
	&= \sum_{j=0}^{a-1}U_{j\tau}\int_0^{\tau} ds\int_s^0 dv U_v \comm{H,\comm{H,X}}U_v^\dag U_{j\tau}^\dag\nonumber\\
	&= \sum_{j=0}^{a-1}U_{j\tau} \comm{H,\frac{1}{\tau}\int_0^{\tau} ds\int_s^0 dv U_v\comm{H,X}U_v^\dag} U_{j\tau}^\dag  \tau\nonumber\\
	&= \Sigma_a(F[X])).\label{EQ_Sigma_recursive}
\end{align}
To get the second last line, we use the fact that $H$ and $U_t$ commute in order to move the integral inside the commutator. 
Repeated applications of this recursive relation yields \cref{EQ_Sigma_sum}.
The condition $n\tau<1$ ensures that the sum in \cref{EQ_Sigma_sum} converges (See \cref{LEM_Sigma}).
\end{proof}

\cref{LEM_Sigma} below is a consequence of \cref{LEM_Sigma_Sum}.
\begin{lemma}
	\label{LEM_Sigma}
	If $X$ is time-independent and $\mu\coloneqq\frac{n t}{r}<1$,  $\norm{\Sigma_a(X)} \leq \frac{2}{1-\mu}\norm{X}$. 
\end{lemma}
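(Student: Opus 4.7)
The plan is to leverage \cref{LEM_Sigma_Sum}, which expresses $\Sigma_a(X)$ as the infinite series $\sum_{k=0}^\infty I_{a\tau}(F^{\circ k}[X])$, and then bound each summand individually. The strategy is to show that $I_{a\tau}$ acts as a bounded map with constant $2$, \emph{independent} of $a$ and $\tau$, while $F$ is a strict contraction with ratio $\mu$; summing a geometric series then gives the claim.

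The first observation is that $H$ commutes with $U_s$, so $U_s\comm{H,Y}U_s^\dag = \comm{H,U_s Y U_s^\dag} = i\frac{d}{ds}\bigl(U_s Y U_s^\dag\bigr)$. Integrating this identity from $0$ to $a\tau$ telescopes $I_{a\tau}$ into a difference of unitarily conjugated operators,
\begin{align}
I_{a\tau}(Y) = i\bigl(U_{a\tau} Y U_{a\tau}^\dag - Y\bigr),
\end{align}
so by the triangle inequality and unitary invariance, $\norm{I_{a\tau}(Y)} \leq 2\norm{Y}$ for every bounded operator $Y$, with no dependence on $a$ or $\tau$.

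Applying the same telescoping idea to the \emph{inner} integral in the definition of $F$ yields $F[X] = -\frac{i}{\tau}\int_0^\tau\bigl(U_s X U_s^\dag - X\bigr)\,ds$. A second application of the fundamental theorem of calculus bounds $\norm{U_s X U_s^\dag - X}\leq 2s\norm{H}\norm{X}$, hence
\begin{align}
\norm{F[X]} \leq \frac{1}{\tau}\int_0^\tau 2s\norm{H}\norm{X}\,ds = \tau\norm{H}\norm{X} \leq n\tau\norm{X} = \mu\norm{X},
\end{align}
using the lattice bound $\norm{H}\leq n$ consistent with the unit-norm normalization of local terms used throughout the paper. Iterating gives $\norm{F^{\circ k}[X]}\leq \mu^k\norm{X}$.

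Assembling both estimates through the triangle inequality,
\begin{align}
\norm{\Sigma_a(X)} \leq \sum_{k=0}^\infty \norm{I_{a\tau}(F^{\circ k}[X])} \leq \sum_{k=0}^\infty 2\mu^k\norm{X} = \frac{2}{1-\mu}\norm{X},
\end{align}
which is the claimed bound; absolute convergence (and thus the interchange of norm and sum) is secured precisely by the hypothesis $\mu<1$. The only substantive obstacle is recognizing the two-level telescoping structure---once to shed the outer integral in $I_{a\tau}$ and once to shed the inner integral in $F$---and keeping track of the correct $n$-dependence via $\norm{H}\leq n$ so that $F$ comes out as a contraction with ratio $\mu$ rather than merely a bounded map; everything else is an unconditional geometric-series estimate.
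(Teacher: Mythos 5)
Your proof is correct and follows essentially the same route as the paper's: both invoke \cref{LEM_Sigma_Sum}, bound $\norm{I_{a\tau}(Y)}\leq 2\norm{Y}$ by telescoping the integral of $\comm{H,U_sYU_s^\dag}$ into $U_{a\tau}YU_{a\tau}^\dag - Y$ (up to a phase), show $F$ contracts with ratio $\tau\norm{H}\leq\mu$, and sum the geometric series. The only cosmetic difference is that you telescope the inner integral of $F$ before estimating, whereas the paper bounds the double integral directly; both yield $\tau\norm{H}\norm{X}$.
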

\begin{proof}
To prove \cref{LEM_Sigma}, we note that 
\begin{align}
	\norm{F[X]}\leq \tau \norm{H} \norm{X}\leq \mu\norm{X}.
\end{align}
Therefore, $\norm{F^{\circ k}[X]}\leq \mu^k\norm{X}$.
Note also that for the time-independent $X$,
\begin{align}
	I_{a\tau}(X)&= \int_0^{a\tau} U_{s} \comm{H,X} U_{s}^\dag ds = U_{a\tau} X U_{a\tau}^\dag - X,
\end{align}
and therefore $\norm{I_{a\tau}(X)} \leq 2\norm{X}$. 
Using \cref{LEM_Sigma_Sum}, we have
\begin{align}
\norm{\Sigma_a(X)} &\leq \sum_{k=0}^\infty \norm{I_{a\tau}(F^{\circ k}[X])} 
\leq 2\sum_{k=0}^\infty \norm{F^{\circ k}[X]}\nonumber\\
&\leq 2\norm X\sum_{k=0}^\infty \mu^k 
=\frac{2}{1-\mu}\norm X\nonumber\\
&=\O{\norm X},
\end{align}
where we have assumed $\mu = \frac{nt}{r}<1$ so that the sum converges. 
Therefore, the lemma follows.
\end{proof}

To prove the \cref{LEM_sum_delta}, we write $\delta = \comm{H,S} + V$ with $S,V$ bounded by \cref{EQ_norm_SV}.
We then use \Cref{LEM_Sigma} with $X=S$ and the triangle inequality to get
\begin{align}
	\Norm{\sum_{j=0}^{a-1} U_{j\tau} \delta \: U_{j\tau}^\dag} 
	&\leq\Norm{\frac{1}{\tau}\Sigma_a(S) }+\Norm{ \sum_{j=0}^{a-1} U_{j\tau} V \: U_{j\tau}^\dag}\\ 
	&= \O{\frac{1}{\tau}\norm{S}} + \O{a\norm{V}} \\
	&=  \O{\frac{nt}{r}}+\O{a\frac{nt^3}{r^3}}.
\end{align}
Thus, the lemma follows.	
\end{proof}
\section{Upper bound on \texorpdfstring{$\Delta_k$}{Delta}}
\label{APP_Delta_k_norm_proof}
In this section, we show how we bound the norms of $\Delta_k$ in \cref{LEM_Delta_k_norm}.
For that, we use \cref{LEM_sum_delta} together with the bound on $\norm{\delta}$ [\cref{EQ_norm_delta}]:
\begin{widetext}
\begin{align}
\left\Vert \Delta_{k}\right\Vert  & =\left\Vert \sum_{i_{1}=0}^{r-k}\sum_{i_{2}=0}^{r-k-i_{1}}\sum_{i_{3}=0}^{r-k-i_{1}-i_{2}}\cdots\sum_{i_{k}=0}^{r-k-i_{1}-i_{2}-\cdots-i_{k}}\underbrace{U_{t/r}^{i_{1}}\delta U_{t/r}^{i_{2}}\delta U_{t/r}^{i_{3}}\delta\cdots}_{\delta\,\mathrm{appears\,}k\,\mathrm{times}}U_{t/r}^{r-k-i_{1}-i_{2}-\cdots-i_{k}}\right\Vert \nonumber \\
& \leq\sum_{i_{1}=0}^{r-k}\sum_{i_{2}=0}^{r-k-i_{1}}\sum_{i_{3}=0}^{r-k-i_{1}-i_{2}}\cdots\left\Vert \delta\right\Vert ^{k-1}\left\Vert \sum_{i_{k}=0}^{r-k-i_{1}-i_{2}-\cdots-i_{k}}U_{t/r}^{i_{k}}\delta U_{t/r}^{-i_{k}}\right\Vert \nonumber \\
& \leq r^{k-1}\left\Vert \delta\right\Vert ^{k-1}\O{\frac{n t}{r}+\frac{nt^3}{r^2}}.
\end{align}
\end{widetext}
Thus, \cref{LEM_Delta_k_norm} follows.

\makeatletter
\renewcommand\@biblabel[1]{[S#1]}
\makeatother
\bibliographystyle{apsrev4-1}
\bibliography{product-formula}

\makeatletter\@input{crossref.tex}\makeatother